\newtheorem{theorem}{Theorem}
\newtheorem {proposition} {Proposition}
\title{\LARGE \bf
Security Games with Decision and Observation Errors
}
\author{Kien C. Nguyen, Tansu Alpcan, and Tamer Ba\c{s}ar
\thanks{This work was supported by Deutsche Telekom Laboratories and the Boeing Company.}
\thanks{Tamer Ba\c{s}ar and Kien C. Nguyen are with the Department of Electrical and Computer Engineering and the Coordinated Science Laboratory, University of Illinois at Urbana-Champaign, USA
        {\tt\small basar1@illinois.edu, knguyen4@illinois.edu}}%
\thanks{Tansu Alpcan is with Deutsche Telekom Laboratories, Technical University of Berlin, Berlin, Germany
        {\tt\small tansu.alpcan@telekom.de}}%
}
\begin{document}

\maketitle
\thispagestyle{empty}
\pagestyle{empty}

\begin{abstract}

We study two-player security games which can be viewed as sequences of nonzero-sum matrix games played by an Attacker and a Defender. The evolution of the game is based on a stochastic fictitious play process. Players do not have access to each other's payoff matrix. Each has to observe the other's actions up to present and plays the action generated based on the best response to these observations. However, when the game is played over a communication network, there are several practical issues that need to be taken into account: First, the players may make random decision errors from time to time. Second, the players' observations of each other's previous actions may be incorrect. The players will try to compensate for these errors based on the information they have. We examine convergence property of the game in such scenarios, and establish convergence to the equilibrium point under some mild assumptions when both players are restricted to two actions.

\end{abstract}

\section{Introduction}

Game theory has recently been used as an effective tool to model and solve many security problems in computer and communication networks. In a noncooperative matrix game between an Attacker and a Defender, if the payoff matrices are assumed to be known to both players, each player can compute the set of Nash equilibria of the game and play one of these strategies to maximize her expected gain (or minimize its expected loss)\footnote{The problem of each player choosing a Nash equilibrium out of multiple Nash equilibria is not discussed within the scope of this paper.}. However, in practice, the players do not necessarily have full knowledge of each other's payoff function. If the game is repeated, a mechanism called fictitious play (FP) can be used for each player to learn her opponent's motivations. In a FP process, each player observes all the actions and makes estimates of the mixed strategy of her opponent. At each stage, she updates this estimate and plays the pure strategy that is the best response (or generated based on the best response) to the current estimate of the other's mixed strategy. It can be seen that in a FP process, if one person plays a fixed strategy (either of the pure or mixed type), the other person's strategy will converge to the best response to this fixed strategy. Furthermore, it has been shown that, for many classes of games, such a FP process will finally render both players playing the Nash equilibrium.

In this paper, we examine a two-player game, where an Attacker (denoted as player $1$ or $P_1$) and a Defender (denoted as player $2$ or $P_2$) participate in a discrete-time repeated nonzero-sum matrix game. In a general setting, the Attacker has $m$ possible actions and the Defender has $n$ posssible actions to choose from. 
When such a security game is played between two automated systems over a network, in order to have a good model, we have to take into account several practical issues. First, the players may make random decision errors from time to time. Instead of playing an action $a_i^j$ that is the output of the best-response computation, player $i$ may play another action $a_i^k$ with some probability (which is typically small for functional systems). Second, the observation that each player makes on her opponent's actions may also be incorrect, which will definitely affect her own responding actions. There are many factors giving rise to these problems: The non-idealiality of electronic and software systems, the uncertain and noisy characteristic of observation data, and the erroneous nature of the channels on which commands and observations are communicated, to name a few. 

It is these scenarios that we aim to address in this paper. We examine convergence of players' strategies in the FP process with decision and observation errors. If these strategies do converge, we quantify the new Nash equilibrium and thus estimate how these decision and observation errors affect the learning process and the equilibrium of the game.

Security games have been examined extensively in a number of papers, see for example, \cite{ABC03,ABC04,ABI,ZY}. The work in \cite{LCMGN} employs the framework of Bayesian games to address the intrusion detection problem in wireless \textit{ad hoc} networks. In \cite{Chen}, the author examines the intrusion detection problem in heterogenous networks as a nonzero-sum static game. The work in \cite{Sall} addresses this problem using the framework of zero-sum stochastic games \cite{Owen}. In \cite{NABGN09}, we develop a network model based on \textit{linear influence networks} that allows us to take into consideration the correlation among the nodes in terms of both security assets and vulnerabilities. 

Relevant literature on fictitious play can be found in  \cite{ROB,MIS,UB,SA04,SA05,AO,MSA05}. For two-player zero-sum classical FP, the convergence proof was obtained for arbitrary numbers of actions for each player ($m \times n$) \cite{ROB}. For nonzero-sum games, the proofs for two-player FP have been found for the case where one player is restricted to two actions (See \cite{UB} for classical FP and \cite{SA04} for stochastic FP). In \cite{NABICC09}, we address the classical FP and stochastic FP with imperfect observations for the case where each player is restricted to two actions. 

Our contributions in this paper are as follows. First, we formulate the repeated security games where players make random decision errors as a fictitious play process. We discuss the convergence of such games in the general case with arbitrary numbers of actions for each player. We then establish the convergence property for several classes of games with decision errors where both players are restricted to two actions. Second, we examine the fictitious play process where the players' observations are imperfect and the players try to compensate for the observation errors. We again establish the convergence property for the case where both players are restricted to two actions. We point out a number of scenarios that can be considered as special cases of this result.

In Section \ref{BG}, we introduce some background and notation adopted from \cite{SA04}, \cite{SA05}. The analysis for the stochastic FP with decision errors is presented in Section \ref{sec:dec_errors}. In Section \ref{sec:obs_errors}, we address the FP with observation errors. Finally, some concluding remarks end the paper.

\section{Background} \label{BG}
\subsection{Static games} \label{static}
We present an overview of some concepts for static security games, where player $P_1$ has $m$ and player $P_2$ has $n$ possible actions. In equations written for the generic player $P_i, \ i=1,2$, we use $k$ to denote $m$ or $n$. Denote by $p_1 \in \Delta(m)$ and $p_2 \in \Delta(n)$ a pair of mixed strategies for $P_1$ and $P_2$, respectively, where $\Delta(k)$ is the simplex in $\Re^k$, i.e., 
  \begin{equation}
  \Delta(k) \equiv \left\{ s \in \Re^k \vert s_j \geq 0, j=1,\ldots,k, \ \sum_{j=1}^k s_j = 1 \right\}.
  \end{equation}
  The utility function of $P_i$, $U_i(p_i,p_{-i})$, is given by \footnote{As standard in the game theory literature, the index $-i$ is used to indicate those of other players, or the opponent's in this case.}
   \begin{eqnarray}
   \label{utility} U_i(p_i,p_{-i}) = p_i^T M_i p_{-i} + \tau_i H(p_i),
  \end{eqnarray}
  where $M_i$ is the payoff matrix of $P_i,i=1,2$, and $H: Int(\Delta(k)) \rightarrow \mathcal{R} $ is the entropy function of the probability vector $p_i$: $H(p_i)=-p_i^T log(p_i)$ (Note that $M_1$ is of dimension $m \times n$ and $M_2$ $n \times m$). The weighted entropy $\tau_i H(p_i)$ with $\tau_i \geq 0$ is introduced to boost mixed strategies. In a security game, $\tau_i$ represents how much player $i$ wants to randomize its actions, and thus is not necessarily known to the other player. Also, for $\tau_1=\tau_2=0$ (referred to as classical FP), the best response mapping can be set-valued, while it has a unique value when $\tau_i>0$ (referred to as stochastic FP) \cite{ZY} \cite{SA05}.
  For a static game, each player selects an integer action $a_i$ according to the mixed strategy $p_i$. The (instant) payoff for player $P_i$ is $v^T_{a_i} M_i v_{a_{-i}} + \tau_i H(p_i)$, where we use $v_j,j=1,\ldots,k$, to indicate the $j$th vertex of the simplex $\Delta(k)$ (For example, when $k=2$, $v_1=[1 \ 0]^T$ for the first action, and $v_2=[0 \ 1]^T$ for the second action). For a pair of mixed strategies $(p_1,p_2)$, the utility functions are given by the expected payoffs:
  \begin{equation}
  U_i(p_i,p_{-i})=E \left[ v^T_{a_i} M_i v_{a_{-i}} \right] + \tau_i H(p_i).
  \end{equation}
  Now,	the \textit{best response} mappings $\beta_1: \Delta(n) \rightarrow  \Delta(m)$ and $\beta_2: \Delta(m) \rightarrow  \Delta(n)$ are defined as:
  \begin{equation} \label{utility_max}
  \beta_i(p_{-i}) = \arg \max_{p_i \in \Delta(k)} {U_i(p_i,p_{-i})}.
  \end{equation}
  If $\tau_i>0$, from (\ref{utility_max}), the best response is unique as mentioned earlier, and is given by the soft-max function:
  \begin{equation} \label{best_response}
	\beta_i(p_{-i}) = \sigma \left( \frac{M_i p_{-i}}{\tau_i} \right),
  \end{equation}
  where the soft-max function $\sigma: \Re^k \rightarrow \ \textrm{Interior}(\Delta(k))$  is defined as
  \begin{equation} \label{softmax}
  (\sigma(x))_j = \frac{e^{x_j}} {\sum_{j=1}^k e^{x_j}}, j=1,\ldots,k.
  \end{equation}
  Note that $(\sigma(x))_j > 0$, and thus the range of the soft-max function is just the interior of the simplex. 
  
  Finally, a (mixed strategy) Nash equilibrium is defined to be a pair $(p_1^*, p_2^*) \in \Delta(m) \times \Delta(n)$ such that for all $p_1 \in \Delta(m)$ and $p_2 \in \Delta(n)$
  \begin{equation} \label{saddle_point}
  U_i(p_i,p^*_{-i}) \leq U_i(p^*_i,p^*_{-i}).
  \end{equation}
  We can also write a Nash equilibrium $(p^*_1, p^*_2)$ as the fixed point of the best response mappings:
  \begin{equation} \label{best_response_mapping}
  p^*_i = \beta _i (p^*_{-i}).
  \end{equation}  
\subsection{Fictitious play}  \label{ss:ctfp}
\subsubsection{Discrete-Time Fictitious Play} 
From the static game described in Subsection \ref{static}, we define discrete-time FP as follows. Suppose that the game is repeated at times $k \in \left\{0,1,2,\ldots \right\}$. The empirical frequency $q_i(k)$ of player $P_i$ is given by
\begin{equation} \label{em_freq}
q_i(k+1) = \frac{1}{k+1} \sum_{j=0}^k v_{a_i(j)} 
\end{equation}
Using induction, we can prove the following recursive relation:
\begin{equation} \label{rec_em_freq}
q_i(k+1)  =  \frac{k}{k+1} q_i(k) + \frac{1}{k+1} v_{a_i(k)}.
\end{equation}
At time $k$, player $P_i$ picks the best response to the empirical frequency of the opponent's actions:
  \begin{equation}
  p_i(k) = \beta_i (q_{-i}(k)).
  \end{equation}
\subsubsection{Continuous-Time Fictitious Play}  
From the equations of discrete-time FP (\ref{em_freq}), (\ref{rec_em_freq}), the continuous-time version of the iteration can be stated as follows (\cite{SA04}, \cite{SA05}, also see \cite{AO,NABICC09} for the derivation):
\begin{eqnarray} \label{ctfp}
 \dot{p}_i(t) &=& \beta_i(p_{-i}(t)) -  p_i(t), \ i=1,2.
\end{eqnarray}

\subsection{Algorithms} \label{algos}
We present in this subsection two algorithms for discrete-time stochastic FP. Algorithm \ref{sfp_po_algo}, derived from \cite{SA04,SA05,NABICC09}, is used for the case when players' observations are considered to be perfect or when they have no estimates of observation errors. Algorithm \ref{sfp_io_algo}, a generalized version of the one in \cite{NABICC09}, is used for players who have estimates of observation errors and want to compensate for these errors.
\subsubsection{Stochastic FP with perfect observations}  \label{sfp_po_algo}

In stochastic FP, at time $k$, player $i,i=1,2$, carries out the following steps:
\begin{enumerate}
	\item 
	Update the empirical frequency of the opponent using (\ref{rec_em_freq}).
	\item
	Compute the best response $\beta_i(q_{-i}(k))$ using (\ref{best_response}). (Note that the result is always a completely mixed strategy.)
	\item
	Generate an action $a_i(k)$ using the mixed strategy from step ($2$), $a_i(k)=rand \left[ \beta_i(q_{-i}(k)) \right]$, where we use $rand$ to denote the randomizer function that gives $a_i(k)$ such that the expectation $E \left[a_i(k) \right]=\beta_i(q_{-i}(k))$.
\end{enumerate}
\subsubsection{Stochastic FP with imperfect observations} \label{sfp_io_algo}

At time $k$, player $i,i=1,2$, carries out the following steps:
\begin{enumerate}
	\item 
	Update the observed frequency of the opponent $\overline{q}_{-i}$ using (\ref{rec_em_freq}).
	\item
	Compute the estimated frequency
	\begin{equation}
	{q}_{-i}=f_{-i}(\overline{q}_{-i}).
	\end{equation}
	\item
	Compute the best response $\beta_i(q_{-i}(k))$ using (\ref{best_response}). (Note that the result is always a completely mixed strategy.)
	\item
	Generate an action $a_i(k)$ using the mixed strategy from step ($3$), $a_i(k)=rand[\beta_i(q_{-i}(k))]$.
\end{enumerate}
\subsection{A convergence result for $m=n=2$ with perfect observations}
We restate the following theorem from \cite{SA04,NABICC09}, for the general case where the coefficients of the entropy terms for the players ($\tau_1$ and $\tau_2$) are not necessarily equal (Cf. Equation (\ref{utility})). This theorem in \cite{SA04} is stated for $\tau_1 = \tau_2$, however, one can always scale the payoff matrices to get the general case.
\begin{theorem}  \label{t2sa04}
(A variant of Theorem $3.2$ \cite{SA04} for general $\tau_1,\tau_2>0$.)
Consider a two-player two-action fictitious play process with $(L^T \tilde{M}_1 L)(L^T \tilde{M}_2 L) \neq 0$, where $\tilde{M}_i$ are the payoff matrices of $P_i,\ i=1,2$ and $L:= (1, \ -1)^T$. The solutions of continuous-time FP (\ref{ctfp}) satisfy
\begin{eqnarray}
 \lim_{t \rightarrow \infty} \left( p_1(t) - \beta_1(p_2(t))   \right) &=& 0 \\
\lim_{t \rightarrow \infty} \left( p_2(t) - \beta_2(p_1(t))   \right) &=& 0,
\end{eqnarray}
where $\beta_i(p_{-i}),\ i=1,2$, are given in (\ref{best_response}).
\end{theorem}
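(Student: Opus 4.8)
The plan is to reduce the two-dimensional (after accounting for the simplex constraints, effectively one-dimensional per player) dynamics of (\ref{ctfp}) to a planar system and then exhibit a Lyapunov-type function whose decrease along trajectories forces the claimed asymptotic best-response relations. Since $m=n=2$, write $p_i(t) = (x_i(t),\,1-x_i(t))^T$ so that the whole state collapses to $(x_1,x_2) \in [0,1]^2$. The best response (\ref{best_response}) becomes a scalar sigmoid: defining the scalar quantities $g_1(x_2) := \beta_1(p_2)_1$ and $g_2(x_1) := \beta_2(p_1)_1$, each $g_i$ is a smooth, strictly monotone function of the single scalar argument, with monotonicity direction determined by the sign of $L^T\tilde M_i L$ (this is exactly where the nondegeneracy hypothesis $(L^T\tilde M_1 L)(L^T\tilde M_2 L)\neq 0$ enters: it guarantees $g_i$ is genuinely monotone rather than constant). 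The dynamics reduce to $\dot x_i = g_i(x_{-i}) - x_i$.

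Next I would analyze this planar cooperative-or-competitive system according to the sign of the product $(L^T\tilde M_1 L)(L^T\tilde M_2 L)$. If the product is positive the map $x_{-i}\mapsto g_i$ has the same orientation for both players and the system is monotone, so trajectories converge to a fixed point by standard monotone-systems arguments. If the product is negative, the composed map $g_1\circ g_2$ is a monotone decreasing map of $[0,1]$ into itself with a unique fixed point; here I would introduce the energy/Lyapunov function along the lines of \cite{SA04}, something like $V(x_1,x_2) = U_1(\beta_1(p_2),p_2) + U_2(p_1,\beta_2(p_1))$ adjusted so that its time derivative along (\ref{ctfp}) is a sum of nonpositive terms, each vanishing only when $p_i = \beta_i(p_{-i})$. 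LaSalle's invariance principle then pins the $\omega$-limit set inside $\{p_1=\beta_1(p_2),\ p_2=\beta_2(p_1)\}$, which is precisely the conclusion (and, by (\ref{best_response_mapping}), the Nash equilibrium). One must also check the trajectory stays in the interior where $H$ and the soft-max are smooth, which follows because $\beta_i$ always returns a completely mixed strategy (the drift points inward on the boundary).

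The main obstacle I expect is constructing the Lyapunov function in the non-symmetric case $\tau_1 \neq \tau_2$: the clean ``potential-like'' structure exploited in \cite{SA04} relies on a common entropy weight, and one has to verify that the rescaling of the payoff matrices mentioned in the theorem statement — replacing $M_i$ by $\tilde M_i$ with compensating factors — actually produces a system for which a single scalar Lyapunov function still works, i.e., that the cross terms coming from the two different $\tau_i$ cancel or combine with a favorable sign. A secondary technical point is handling the degenerate orientation carefully via the $(L^T\tilde M_1 L)(L^T\tilde M_2 L)\neq 0$ hypothesis, so that $g_i'$ is bounded away from a sign change uniformly, ensuring the fixed point is isolated and the limit is genuine rather than a continuum. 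Once those are in place, the convergence of $p_i(t) - \beta_i(p_{-i}(t))$ to zero is immediate from the invariance principle.
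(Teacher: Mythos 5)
Your overall strategy (reduce to a planar system $\dot x_i = g_i(x_{-i})-x_i$ and split on the sign of $(L^T\tilde M_1L)(L^T\tilde M_2L)$) is reasonable, and the positive-product case handled by planar monotone/cooperative-systems arguments is a sound alternative to the identical-interest reduction used in \cite{SA04}. However, the negative-product case --- which is the only case where cycling is a real threat and hence the heart of the theorem --- is not actually proved: you only posit a Lyapunov function ``something like $V=U_1(\beta_1(p_2),p_2)+U_2(p_1,\beta_2(p_1))$, adjusted so that its time derivative is a sum of nonpositive terms.'' That adjusted $V$ with $\dot V\le 0$ does not exist off the shelf for a general nonzero-sum $2\times 2$ game; the computation that makes $\dot V\le 0$ work in \cite{SA04} uses the zero-sum structure, so one must first show that a $2\times2$ game with negative product is best-response equivalent to a zero-sum game (via column-constant shifts, which leave the soft-max unchanged, and positive rescalings, which amount to changing $\tau_i$), and only then apply the zero-sum Lyapunov argument. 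That equivalence step is exactly the content of the proof of Theorem 3.2 in \cite{SA04}, and it is the step your sketch assumes rather than supplies.

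You also mislocate the difficulty: the ``main obstacle'' you flag, $\tau_1\neq\tau_2$, is in fact trivial and is the entirety of the paper's own proof. Since $\beta_i(p_{-i})=\sigma(M_ip_{-i}/\tau_i)=\sigma\bigl(((\kappa/\tau_i)M_i)\,p_{-i}/\kappa\bigr)$ for any $\kappa>0$, replacing $M_i$ by $(\kappa/\tau_i)M_i$ produces a game with common entropy weight $\kappa$ whose best-response maps, and hence whose continuous-time FP vector field (\ref{ctfp}), are literally identical to the original; the nondegeneracy quantity is only multiplied by the positive constant $\kappa^2/(\tau_1\tau_2)$, so the hypothesis is preserved. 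The paper then simply invokes Theorem 3.2 of \cite{SA04}; there are no cross terms to cancel and nothing further to verify. So your plan, once completed, would amount to re-deriving the proof of \cite{SA04} (with a monotone-systems substitute in the positive-product case), whereas the missing zero-sum-equivalence construction in the negative-product case is a genuine gap in what you wrote.
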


\section{Security games with decision errors} \label{sec:dec_errors}
In this section, we consider the situations where players are not totally rational or the channels carrying commands are error prone.  Specifically, $P_1$ makes decision errors with probabilities	$\alpha_{ij}$'s where $\alpha_{ij},\ i,j=1 \ldots m$, is the probability that $P_1$ intends to play action $i$ but ends up playing action $j$, $\alpha_{ij} \geq 0$, $\sum_{j=1}^m \alpha_{ij}=1,\ i=1 \ldots m$. Similarly, $P_2$'s decision error probabilities are given by $\epsilon_{ij},\ \epsilon_{ij} \geq 0$, $\sum_{j=1}^m \epsilon_{ij}=1,\ i=1 \ldots n$. This is called \textit{``trembling hand''} problem in the game theory literature (See for example, Reference \cite{TB}, Subsection 3.5.5). The decision error matrices $D_1$ and $D_2$ are given below.
		\begin{eqnarray} \label{error_dec_mn_1}
		D_1 &=& \left(
		\begin{array}{cccc}
			\alpha_{11}	& \alpha_{12} & \ldots & \alpha_{1m}  \\
			\alpha_{21}	& \alpha_{22} & \ldots & \alpha_{2m}  \\	
			\ldots \\
			\alpha_{m1}	& \alpha_{m2} & \ldots & \alpha_{mm}  \\	
		\end{array}
		\right), \\
\label{error_dec_mn_2} 	 
		D_2 &=& \left(
		\begin{array}{cccc}	
			\epsilon_{11}	& \epsilon_{12} & \ldots & \epsilon_{1n}  \\
			\epsilon_{21}	& \epsilon_{22} & \ldots & \epsilon_{2n}  \\	
			\ldots \\
			\epsilon_{n1}	& \epsilon_{n2} & \ldots & \epsilon_{nn}  \\	
				\end{array}
		\right).
		\end{eqnarray}	

 When $m=n=2$, the decision error matrices can be written as:
		\begin{equation} \label{error_dec}
		D_1 = \left(
		\begin{array}{cc}
			1-\alpha	&	\gamma \\
			\alpha &	1-\gamma \\
		\end{array}
		\right), \  \
		D_2 = \left(
		\begin{array}{cc}
			1-\epsilon	&	\mu \\
			\epsilon & 1-\mu	 \\
		\end{array}
		\right)
		\end{equation}	
The decision errors of each player in this case are illustrated in Figure \ref{fig:decision_errors}. In what follows, we  state two standard results in digital communications. The proofs are similar to those for the case $m=n=2$ in \cite{NABICC09}.
\begin{proposition} \label{dec_error_limit}
Consider the two-player discrete-time fictitious play with decision errors where the error probabilities are given in Equations (\ref{error_dec_mn_1}) and (\ref{error_dec_mn_2}). Let $\widetilde{\alpha}_{ij}, \ i,j=1 \ldots m$, and $\widetilde{\epsilon}_{ij}, \ i,j=1 \ldots n$, be the empirical decision error frequencies of $P_1$ and $P_2$, respectively. If decision errors are assumed to be independent from stage to stage, it holds that
\begin{eqnarray}
\nonumber \lim_{k \rightarrow \infty} \ a.s. \ \widetilde{\alpha}_{ij}  &=& \alpha_{ij}, \ i,j=1 \ldots m,\\
\lim_{k \rightarrow \infty} \ a.s. \ \widetilde{\epsilon}_{ij} &=& {\epsilon}_{ij}, \ i,j=1 \ldots n.
\end{eqnarray}
where we use \emph{$\lim$ a.s.} to denote \emph{almost sure convergence}.
\end{proposition}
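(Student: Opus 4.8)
The statement is a strong law of large numbers, and the plan is to reduce it, after some bookkeeping, to the classical SLLN along a diverging (but random) index. I treat $\widetilde{\alpha}_{ij}$; the argument for $\widetilde{\epsilon}_{ij}$ is identical with the roles of $P_1$ and $P_2$ exchanged. Model the stage‑$l$ decision error of $P_1$ by an auxiliary random variable $Z_i(l)$ with $\Pr\{Z_i(l)=j\}=\alpha_{ij}$, representing the action $P_1$ actually plays if it intends action $i$ at stage $l$; by hypothesis these are independent across $l$ and independent of the entire fictitious‑play history up to and including the intention at stage $l$. Writing $X_i(l)=1$ when $P_1$ intends action $i$ at stage $l$ and $0$ otherwise, the empirical error frequency is
\begin{equation}
\widetilde{\alpha}_{ij}(k)=\frac{\sum_{l=0}^{k} X_i(l)\,\mathbf{1}\{Z_i(l)=j\}}{\sum_{l=0}^{k} X_i(l)}=:\frac{A_{ij}(k)}{N_i(k)} ,
\end{equation}
so it suffices to show (i) $N_i(k)\to\infty$ a.s., and (ii) $A_{ij}(k)/N_i(k)\to\alpha_{ij}$ a.s.

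\textbf{Step 1 (the denominator diverges).} $P_1$'s intended mixed strategy at stage $l$ is $\beta_1(q_2(l))=\sigma\!\left(M_1 q_2(l)/\tau_1\right)$ by (\ref{best_response}), and since $q_2(l)\in\Delta(n)$ ranges over a compact set while $M_1$ and $\tau_1>0$ are fixed, the argument of $\sigma$ stays in a compact set; hence every component of this soft‑max vector is bounded below by a constant $\delta>0$. Letting $\mathcal{F}_l$ denote the $\sigma$‑field generated by all data through the intention at stage $l$ (but not the stage‑$l$ error), we get $\Pr\{X_i(l)=1\mid\mathcal{F}_l\}\geq\delta$ for every $l$, so $\sum_l \Pr\{X_i(l)=1\mid\mathcal{F}_l\}=\infty$, and the conditional (L\'evy) Borel--Cantelli lemma yields $N_i(\infty)=\sum_l X_i(l)=\infty$ a.s.

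\textbf{Step 2 (the ratio converges).} The process $S_k:=\sum_{l=0}^{k}X_i(l)\big(\mathbf{1}\{Z_i(l)=j\}-\alpha_{ij}\big)=A_{ij}(k)-\alpha_{ij}N_i(k)$ is a martingale with respect to $\{\mathcal{F}_l\}$ with increments bounded by $1$ and predictable quadratic variation $\langle S\rangle_k=\alpha_{ij}(1-\alpha_{ij})\,N_i(k)$. On $\{N_i(\infty)=\infty\}$, which has probability one by Step 1, the martingale strong law gives $S_k/\langle S\rangle_k\to 0$, hence $S_k/N_i(k)\to0$ a.s., whenever $\alpha_{ij}(1-\alpha_{ij})>0$; the degenerate cases $\alpha_{ij}\in\{0,1\}$ are trivial. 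Therefore $\widetilde{\alpha}_{ij}(k)=\alpha_{ij}+S_k/N_i(k)\to\alpha_{ij}$ a.s. (Alternatively, one may enumerate the stages at which action $i$ is intended, observe that the associated realized‑error indicators form an i.i.d.\ Bernoulli$(\alpha_{ij})$ sequence thanks to the ``fresh draw'' independence of the errors, and apply the ordinary SLLN evaluated at index $N_i(k)\to\infty$.)

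\textbf{Main obstacle.} The computations are routine; the only point requiring genuine care is Step 1 --- guaranteeing that the denominator $N_i(k)$ grows without bound so that the law of large numbers applies at all. This is exactly where the completely mixed nature of the stochastic best response is used (the compactness‑plus‑positivity uniform lower bound $\delta$ on the soft‑max components), which is why the result is stated for stochastic FP with $\tau_i>0$; moreover a \emph{conditional} Borel--Cantelli argument is unavoidable because the intention events at distinct stages are not independent.
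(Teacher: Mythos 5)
Your proposal is correct, and it follows the same underlying route the paper intends: the paper does not spell the proof out at all, but simply notes that these are ``standard results'' whose proofs mirror the $m=n=2$ case in \cite{NABICC09}, i.e.\ an appeal to the strong law of large numbers for the stage-wise independent error events. What you add beyond that citation-level argument is the bookkeeping that makes the LLN legitimately applicable: (i) the observation that the relevant sample size is the random count $N_i(k)$ of stages at which action $i$ is intended, and that this count diverges almost surely because the stochastic best response $\sigma(M_1 q_{2}/\tau_1)$ with $\tau_1>0$ is uniformly completely mixed over the compact simplex (handled via the conditional Borel--Cantelli lemma, since the intention events are history-dependent and not independent across stages); and (ii) a martingale SLLN (or, equivalently, the i.i.d.\ subsequence argument along the predictable times when action $i$ is intended) to conclude $A_{ij}(k)/N_i(k)\to\alpha_{ij}$ a.s. These are exactly the points the paper's one-line reference glosses over, so your write-up is a more self-contained version of the same proof rather than a genuinely different one; no gaps.
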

\begin{proposition} \label{dec_error_channel}
Consider a two-player discrete-time fictitious play with decision errors where the error probabilities are given in Equations (\ref{error_dec_mn_1}) and (\ref{error_dec_mn_2}). Let $\overline{q}_{i}$ be the empirical frequency of player $i$'s real actions and $q_{i}$ be the frequency of player $i$'s intended actions (generated from the best response at each stage). If decision errors are assumed to be independent from stage to stage, it holds that
\begin{eqnarray} \label{dec_error_freq_prob}
\lim_{k \rightarrow \infty} \ a.s. \ \overline{q}_{i} = D_{i} (\lim_{k \rightarrow \infty} \ a.s. \ q_{i}), \ i=1,2,
\end{eqnarray}
where $D_i$ are the decision error matrices given in Equations (\ref{error_dec_mn_1}) and (\ref{error_dec_mn_2}).
		\end{proposition}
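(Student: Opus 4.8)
The plan is to write each realized action as its conditional mean given the past plus a bounded martingale-difference term, and then to kill the average of the noise with a strong law for martingale differences. Fix a player $i$. For stage $j$, let $b_i(j):=v_{a_i(j)}$ denote the simplex vertex corresponding to the \emph{intended} action and $\overline{b}_i(j)$ the vertex of the \emph{realized} action, and let $\mathcal{G}_j$ be the $\sigma$-algebra generated by all intended and realized actions of both players through stage $j$ together with the randomizer used to produce $a_i(j)$. The intended action is the randomized best response to the opponent's observed frequency, hence is measurable with respect to $\mathcal{G}_{j-1}$ after adjoining the stage-$j$ randomizer; moreover, decision errors are assumed independent from stage to stage, so conditionally on $b_i(j)=v_l$ the realized action follows the $l$-th column of $D_i$. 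Therefore $E\!\left[\overline{b}_i(j)\mid \mathcal{G}_{j-1},\, b_i(j)\right] = D_i\, b_i(j)$. (The only thing to verify here is the index bookkeeping in (\ref{error_dec_mn_1})--(\ref{error_dec_mn_2}): the column of $D_i$ selected by the intended action must be the conditional law of the realized action, which is precisely the reading used in (\ref{error_dec}).)

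Next I would set $w_j := \overline{b}_i(j) - D_i\, b_i(j)\in\Re^k$. By construction $\{w_j\}$ is a sequence of martingale differences with respect to the filtration $\mathcal{G}_{j-1}\vee\sigma(b_i(j))$, and every coordinate of $w_j$ lies in $[-1,1]$. Summing the identity $\overline{b}_i(j) = D_i\, b_i(j) + w_j$ over $j=0,\ldots,k$, dividing by $k+1$, and using the definition (\ref{em_freq}) of the empirical frequencies gives
\[
\overline{q}_i(k+1) \;=\; D_i\, q_i(k+1) \;+\; \frac{1}{k+1}\sum_{j=0}^{k} w_j .
\]
Since the $w_j$ are bounded, $\sum_{j\ge 1} j^{-2}\,E\!\left[\|w_j\|^2\right]<\infty$, so the strong law of large numbers for martingale differences (equivalently: martingale convergence applied coordinatewise to $\sum_j w_j/j$, followed by Kronecker's lemma) yields $\frac{1}{k+1}\sum_{j=0}^{k} w_j \to 0$ almost surely.

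Consequently $\overline{q}_i(k+1) - D_i\, q_i(k+1) \to 0$ a.s., which already contains (\ref{dec_error_freq_prob}): on the probability-one event where $q_i$ converges, $\overline{q}_i$ converges to $D_i$ times the same limit. The routine parts are the index/transpose bookkeeping for $D_i$ and checking the martingale-difference property; the only genuinely substantive point is that the intended actions $b_i(j)$ are \emph{adapted} but not independent (they are fed back through the fictitious-play update), so an ordinary strong law does not apply and the martingale version is actually needed. Everything else is linearity of conditional expectation and the averaging identity above.
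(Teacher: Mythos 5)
Your proof is correct, and in fact it supplies what the paper itself omits: here the authors give no argument at all, simply calling Propositions \ref{dec_error_limit} and \ref{dec_error_channel} ``standard results in digital communications'' whose proofs are ``similar to those for the case $m=n=2$ in \cite{NABICC09}.'' Your route --- writing the realized vertex as $\overline{b}_i(j)=D_i b_i(j)+w_j$ with $\{w_j\}$ bounded martingale differences, killing the average of $w_j$ by the martingale strong law (or $L^2$-bounded martingale convergence of $\sum_j w_j/j$ plus Kronecker), and concluding $\overline{q}_i(k)-D_i q_i(k)\to 0$ a.s.\ --- is the natural way to make the ``standard'' claim rigorous, and you correctly isolate the one genuinely non-routine point: the intended actions are adapted (fed back through the fictitious-play update) and not independent of the error process, so a plain i.i.d.\ strong law does not apply and the conditional/martingale version is what is actually needed. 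Your conclusion is in fact slightly stronger than the statement, which only asserts equality of the limits when they exist; restricting to the event where $q_i$ converges, as you do, recovers (\ref{dec_error_freq_prob}). One bookkeeping remark worth making explicit: with the convention actually stated for (\ref{error_dec_mn_1})--(\ref{error_dec_mn_2}) (rows indexed by the intended action, $\sum_j \alpha_{ij}=1$, i.e.\ row-stochastic), the identity would read $\overline{q}_i = D_i^{T} q_i$; the claim $\overline{q}_i = D_i q_i$ requires the column-stochastic reading in which the column selected by the intended action is the conditional law of the realized action, which is the convention used in the $2\times 2$ matrices (\ref{error_dec}). You flagged exactly this and adopted the reading that makes the proposition true as stated, which is the right resolution of an inconsistency that sits in the paper's notation rather than in your argument.
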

 
\begin{psfrags}
\psfrag{alpha}{$\alpha$}
\psfrag{1-alpha}{$1-\alpha$}
\psfrag{gamma}{$\gamma$}
\psfrag{1-gamma}{$1-\gamma$}
\psfrag{epsilon}{$\epsilon$}
\psfrag{1-epsilon}{$1-\epsilon$}
\psfrag{mu}{$\mu$}
\psfrag{1-mu}{$1-\mu$}
\psfrag{a11}{$a_1^1$ (A)}
\psfrag{a12}{$a_1^2$ (N)}
\psfrag{a21}{$a_2^1$ (D)}
\psfrag{a22}{$a_2^2$ (N)}
\psfrag{Attacker}{$P_1$ (Attacker)}
\psfrag{Defender}{$P_2$ (Defender)}
\begin{figure}[ht]
  \centering
  \includegraphics[width=6.5cm]{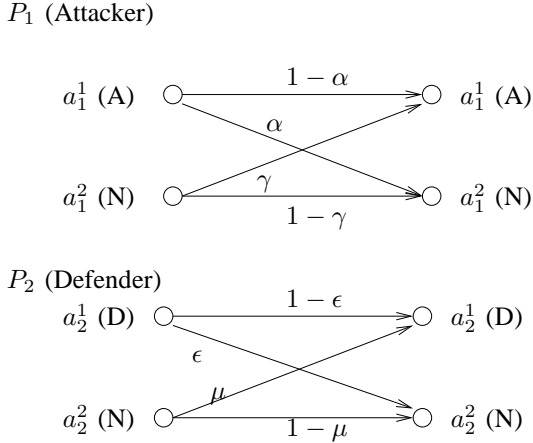}
  \caption{The case $m=n=2$ where players make decision errors with probabilities  $\alpha, \gamma, \epsilon$, and $\mu$.} \label{fig:decision_errors}
\end{figure}
\end{psfrags}
\subsection{If the players know their own decision error probabilities}
We first consider the case where the players both have complete information about the decision error matrices $D_i$, $i=1,2$. If they both also know the payoff matrices $M_i, \ i=1,2$, then each can compute and play one of the Nash equilibria right from beginning. The problem then can be considered as a stochastic version of the trembling hand problem. Specifically, suppose that each player still wants to randomize their empirical frequency $\overline{p}_{i}$ (instead of the frequency of their intended actions, or intended frequency, ${p}_{i}$) by including an entropy term in their utility function, we have that
   \begin{eqnarray}
   \label{utility_th} U_i(p_i,p_{-i}) = p_i^T \tilde{M}_i p_{-i} + \tau_i H(D_i p_i), \ i=1,2,
  \end{eqnarray}
  where $p_i$'s are intended frequencies, $\tilde{M}_1=D^T_1 M_1 D_2$ and $\tilde{M}_2=D^T_2 M_2 D_1$ (These are the payoff matrices resulted from decision errors using the results in Propositions \ref{dec_error_limit} and \ref{dec_error_channel}, see for example \cite{TB} for derivation). Using $\overline{p}_{i} := D_i {p}_{i},\ i=1,2$, the utility functions now can be written as
   \begin{eqnarray}
   U_i(p_i,p_{-i}) = \overline{p}_{i}^T M_i \overline{p}_{-i} + \tau_i H(\overline{p}_i), \ i=1,2.
  \end{eqnarray}
  The game is thus reduced to the one without decision errors and the Nash Equilibrium of the static game is known from Subsection \ref{static} to satisfy:
  \begin{equation} \label{best_response_mapping_decision_error}
  \overline{p}^*_i = \beta_i (\overline{p}^*_{-i}), \ i=1,2,
  \end{equation}  
  or equivalently (with the assumption that $D_i$'s are invertible): 
  \begin{equation}
  {p}^*_i = (D_i)^{-1} \beta_i (D_{-i} {p}^*_{-i}), \ i=1,2.
  \end{equation}  
	The best response is now given as
  \begin{equation} \label{best_response_decision_error}
	{p}_i = (D_i)^{-1} \beta_i(\overline{p}_{-i}) = (D_i)^{-1} \sigma \left( \frac{M_i \overline{p}_{-i}}{\tau_i} \right).
  \end{equation}
In the corresponding FP process (the \textit{``trembling hand stochastic FP''}), as each player $P_i$ can observe her opponent's empirical frequency $\overline{p}_{-i}$, she does not need to know $D_{-i}$ to compute the best response. We thus state below a convergence result for the FP process with decision errors for the case $m=n=2$.
\begin{proposition}
Consider a two-player two-action fictitious play process where players make decision errors with invertible decision error matrices $D_1$ and $D_2$, respectively. Suppose that at each step, each player calculates the best response taking into account their own decision errors using Equation (\ref{best_response_decision_error}). If  $(L^T M_1 L)(L^T M_2 L) \neq 0$, $L:= (1, \ -1)^T$, the solutions of the  continuous-time FP process with decision errors will satisfy
\begin{eqnarray}
\nonumber \lim_{t \rightarrow \infty} p_1(t) = D_1^{-1} \sigma \left( \frac{ M_1 D_2 \lim_{t \rightarrow \infty} p_{2}(t)}{\tau_1} \right) , \\ 
\label{dec_err} \lim_{t \rightarrow \infty} p_2(t) =D_2^{-1} \sigma \left( \frac{M_2 D_1 \lim_{t \rightarrow \infty} p_{1}(t)}{\tau_2} \right).
\end{eqnarray}
where $\sigma(.)$ is the soft-max function defined in (\ref{softmax}).
\end{proposition}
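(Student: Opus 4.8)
The plan is to reduce the decision-error dynamics to the ordinary two-action fictitious play of Theorem \ref{t2sa04} via the linear change of variables $\overline p_i := D_i p_i$, apply that theorem with the payoff matrices $M_1, M_2$, and then read off the claimed limits by inverting the substitution.

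First I would write down the continuous-time dynamics implied by the decision-error best-response rule (\ref{best_response_decision_error}). Because $P_i$ plays so that, after her own errors, her realized frequency equals $\beta_i$ evaluated at the opponent's realized frequency $\overline p_{-i}(t) = D_{-i}p_{-i}(t)$ (cf.\ Proposition \ref{dec_error_channel}), the continuous-time analogue of (\ref{best_response_decision_error}) --- obtained just as (\ref{ctfp}) was obtained from discrete FP --- is
\begin{equation*}
\dot p_i(t) \;=\; D_i^{-1}\beta_i\!\bigl(D_{-i}\,p_{-i}(t)\bigr) - p_i(t), \qquad i=1,2 .
\end{equation*}
Left-multiplying by $D_i$ and putting $\overline p_i(t) := D_i p_i(t)$ --- a bijective change of variables exactly because $D_i$ is invertible, and one that keeps $\overline p_i$ inside $\Delta(2)$ since the $2\times2$ error matrices (\ref{error_dec}) are column-stochastic --- gives $\dot{\overline p}_i(t) = \beta_i(\overline p_{-i}(t)) - \overline p_i(t)$. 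This is precisely the standard two-action continuous-time FP (\ref{ctfp}) for the game \emph{without} decision errors and with payoff matrices $M_1,M_2$.

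Next I would invoke Theorem \ref{t2sa04} with $\tilde M_i := M_i$; its non-degeneracy hypothesis $(L^T\tilde M_1 L)(L^T\tilde M_2 L)\neq 0$ is \emph{verbatim} the assumed $(L^T M_1 L)(L^T M_2 L)\neq 0$, so nothing transformed needs checking. The theorem gives $\overline p_i(t) - \beta_i(\overline p_{-i}(t)) \to 0$, i.e.\ $D_i p_i(t) - \sigma\!\bigl(M_i D_{-i} p_{-i}(t)/\tau_i\bigr)\to 0$. To promote this to convergence of the individual trajectories, I would parametrize each $\Delta(2)$ by a scalar and observe that the resulting planar vector field $\dot{\overline p}_i = \beta_i(\overline p_{-i}) - \overline p_i$ has divergence identically $-2$ on the simply connected square: by Bendixson's criterion it admits no periodic orbits or homoclinic/heteroclinic loops, so by Poincar\'{e}--Bendixson every bounded trajectory converges to a rest point $(\overline p_1^*,\overline p_2^*)$, i.e.\ a fixed point of the best-response maps, which satisfies (\ref{best_response_mapping_decision_error}). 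Undoing the substitution, $p_i(t) = D_i^{-1}\overline p_i(t) \to D_i^{-1}\overline p_i^*$; and since $\overline p_i^* = \beta_i(\overline p_{-i}^*) = \sigma\!\bigl(M_i\overline p_{-i}^*/\tau_i\bigr)$ with $\overline p_{-i}^* = D_{-i}\bigl(\lim_{t\to\infty}p_{-i}(t)\bigr)$, specializing to $i=1,2$ reproduces (\ref{dec_err}) exactly.

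The ODE bookkeeping and the final back-substitution are routine, and the non-degeneracy condition needs no translation. The step deserving genuine care is the passage from the best-response property delivered by Theorem \ref{t2sa04} to honest convergence of the trajectories $p_i(t)$ --- the proposition asserts that $\lim_{t\to\infty}p_i(t)$ actually exists --- and this is where the two-action restriction is essential, through the divergence/Bendixson argument that rules out recurrent behaviour in the plane. A secondary point worth making explicit is that invertibility of $D_i$ is used only to turn $\overline p_i = D_i p_i$ into a bijection, so the intended frequencies can be recovered from the realized ones; the hypothesis enters nowhere else.
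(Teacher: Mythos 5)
Your argument is correct and takes essentially the same route as the paper, whose proof consists precisely of the substitution $\overline{p}_i = D_i p_i$ together with an appeal to Theorem \ref{t2sa04}. The only addition is your divergence/Poincar\'e--Bendixson step upgrading $\overline{p}_i(t) - \beta_i(\overline{p}_{-i}(t)) \to 0$ to genuine convergence of the trajectories, a point the paper leaves implicit.
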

\begin{proof}
The proof can be obtained using Theorem \ref{t2sa04} and the fact $\overline{p}_{i} := D_i {p}_{i},\ i=1,2$.
\end{proof}

It thus can be seen that with knowledge of their own decision errors, players can completely precompensate for these errors and the equilibrium empirical frequencies remain the same as those of the original game without decision errors.
\subsection{If the players are unaware of all the decision error probabilities} \label{sub:dec_errors_unaware}
We consider in this subsection a two-player fictitious play process with decision errors where the decision error probabilities are not known to both players. Each player plays the regular stochastic FP Algorithm \ref{sfp_po_algo}. We are interested in whether or not the FP process will converge, and when it does, what the equilibrium will be. We first examine the general case with arbitrary $m,\ n$, and then the special case where $m=n=2$. We first use Proposition \ref{dec_error_channel} and the same arguments as in the proof of Theorem 3 \cite{NABICC09} to approximate the discrete-time FP with the continuous-time version. At time step $k$, as each player $P_i$ generates her action $v_{a_i(k)}$ based on the best response to her opponent's empirical frequency $\overline{q}_{-i}$,  the expectation of $v_{a_i(k)},\ i=1,2$, will be given by
\begin{eqnarray} \label{}
\nonumber E \left[v_{a_1(k)} \right]  &=&  D_1 \beta_1(\overline{q}_2(k)),\\
\nonumber E[v_{a_2(k)}]  &=&  D_2 \beta_2(\overline{q}_1(k)),
\end{eqnarray}
where $D_1$ and $D_2$ account for decision errors.
The mean dynamic of the empirical frequencies then can be written as follows
\begin{eqnarray} \label{}
\nonumber \overline{q}_1(k+1)  &=&  \frac{k}{k+1} \overline{q}_1(k) + \frac{1}{k+1} D_1 \beta_1(\overline{q}_2(k)),\\
\label{io_em_freq} \overline{q}_2(k+1)  &=&  \frac{k}{k+1} \overline{q}_2(k) + \frac{1}{k+1}  D_2 \beta_2(\overline{q}_1(k)).
\end{eqnarray}
From the mean dynamic, we can derive the continuous-time approximation (See \cite{NABACC10TR} for the derivation):
\begin{eqnarray} 
\nonumber  \dot{\overline{p}}_1(t) &=& D_1 \beta_1(\overline{p}_{2}(t)) -  \overline{p}_1(t),\\
\label{ctde}  \dot{\overline{p}}_2(t) &=& D_2 \beta_2(\overline{p}_{1}(t)) -  \overline{p}_2(t).
\end{eqnarray}
It can be seen that a pair of mixed strategies $(p^*_1, p^*_2)$ that satisfies
\begin{eqnarray} \label{ct_mn_dynamics}
\nonumber  \overline{p}^*_1(t) & =& D_1 \beta_1( \overline{p}^*_{2}(t)),\\
\nonumber  \overline{p}^*_2(t) &=& D_2 \beta_2( \overline{p}^*_{1}(t)).
\end{eqnarray}
will be an equilibrium point of the dynamics (\ref{ctde}). For some results on the stability of the equilibrium point in the continuous-time system and the discrete-time system for general values of $m$ and $n$, we refer to \cite{NABACC10TR}. When $m=n=2$, it turns out the point $(\overline{p}^*_1, \overline{p}^*_2)$ is globally stable for the continuous-time system under some mild assumptions. We thus state the following theorem for this special case.
\begin{theorem} \label{io_ctfp_effect}
Consider a two-player two-action fictitious play process where players make decision errors with decision error matrices $D_1$ and $D_2$, respectively. Suppose that the players are unaware of all the decision error probabilities and use the regular stochastic FP algorithm \ref{sfp_po_algo}. If $D_i, \ i=1,2$, are invertible and $(L^TM_1D_2 L)(L^TM_2D_1 L) \neq 0$, the solutions of continuous-time FP process with decision errors (\ref{ctde}) will satisfy
\begin{eqnarray}
\nonumber \lim_{t \rightarrow \infty} \overline{p}_1(t) =D_1 \sigma \left( \frac{M_1 \lim_{t \rightarrow \infty} \overline{p}_{2}(t)}{\tau_1} \right), \\ 
\label{dec_err} \lim_{t \rightarrow \infty} \overline{p}_2(t) =D_2 \sigma \left( \frac{M_2 \lim_{t \rightarrow \infty} \overline{p}_{1}(t)}{\tau_2} \right).
\end{eqnarray}
where $\sigma(.)$ is the soft-max function defined in (\ref{softmax}).
\end{theorem}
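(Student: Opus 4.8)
The plan is to reduce the decision-error dynamics (\ref{ctde}) to an ordinary two-action continuous-time fictitious play of the form (\ref{ctfp}) via an invertible change of variables, apply Theorem \ref{t2sa04}, and then argue separately that $\overline p_i(t)$ itself converges so that the limit may be moved inside the soft-max. Concretely, since $D_1,D_2$ are invertible, I would set $\hat p_i(t):=D_i^{-1}\overline p_i(t)$, $i=1,2$. Because each column of $D_i$ sums to one (cf. (\ref{error_dec})), $\mathbf{1}^TD_i=\mathbf{1}^T$ with $\mathbf{1}:=(1,1)^T$, hence $\mathbf{1}^TD_i^{-1}=\mathbf{1}^T$ and $\mathbf{1}^T\hat p_i\equiv\mathbf{1}^T\overline p_i=1$; so $\hat p_i$ stays on the line $\{x:\mathbf{1}^Tx=1\}$, though it may temporarily leave $\Delta(2)$. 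Differentiating, using (\ref{ctde}), $\beta_i(\overline p_{-i})=\sigma(M_i\overline p_{-i}/\tau_i)$ from (\ref{best_response}), and $\overline p_{-i}=D_{-i}\hat p_{-i}$, one gets
\begin{eqnarray*}
\dot{\hat p}_1 &=& D_1^{-1}\bigl(D_1\beta_1(\overline p_2)-\overline p_1\bigr)=\sigma\left(\frac{(M_1D_2)\hat p_2}{\tau_1}\right)-\hat p_1,\\
\dot{\hat p}_2 &=& \sigma\left(\frac{(M_2D_1)\hat p_1}{\tau_2}\right)-\hat p_2,
\end{eqnarray*}
i.e. $(\hat p_1,\hat p_2)$ solves (\ref{ctfp}) for the two-action game with payoff matrices $\tilde M_1:=M_1D_2$ and $\tilde M_2:=M_2D_1$.

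I would then check that the hypothesis of Theorem \ref{t2sa04} for this transformed game, $(L^T\tilde M_1L)(L^T\tilde M_2L)\neq 0$, is exactly the assumed $(L^TM_1D_2L)(L^TM_2D_1L)\neq 0$; using $D_2L=(1-\epsilon-\mu)L$ and $D_1L=(1-\alpha-\gamma)L$ from (\ref{error_dec}), this is equivalent to non-degeneracy $(L^TM_1L)(L^TM_2L)\neq 0$ of the underlying game together with $\det D_i\neq 0$, consistent with the standing invertibility assumption. Theorem \ref{t2sa04} applied to $(\hat p_1,\hat p_2)$ then gives $\lim_{t\to\infty}\bigl(\hat p_i(t)-\sigma(M_i\overline p_{-i}(t)/\tau_i)\bigr)=0$; left-multiplying by the fixed matrix $D_i$ yields $\lim_{t\to\infty}\bigl(\overline p_i(t)-D_i\beta_i(\overline p_{-i}(t))\bigr)=0$.

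To upgrade this to the stated identity (\ref{dec_err}) I need $\overline p_i(t)$ to converge. For this I would exploit that (\ref{ctde}) is effectively a planar system: with $y_i:=L^T\overline p_i$ one has $L^TD_1\beta_1(\overline p_2)=(\gamma-\alpha)+(1-\alpha-\gamma)\tanh\bigl(L^TM_1\overline p_2/(2\tau_1)\bigr)$ (and symmetrically for player $2$), and $L^TM_i\overline p_{-i}$ is affine in $y_{-i}$ alone, so (\ref{ctde}) closes into $\dot y_1=g_1(y_2)-y_1$, $\dot y_2=g_2(y_1)-y_2$ on a forward-invariant absorbing rectangle. The divergence of this field is identically $-2$, so Bendixson's criterion rules out periodic orbits and graphics and, by Poincar\'e--Bendixson, every trajectory converges to an equilibrium point, which by definition of (\ref{ctde}) satisfies $\overline p_i^*=D_i\beta_i(\overline p_{-i}^*)$. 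Hence $\lim_{t\to\infty}\overline p_i(t)$ exists, and passing to the limit in the relation of the previous paragraph, with $\sigma$ continuous, gives $\lim\overline p_i=D_i\sigma(M_i\lim\overline p_{-i}/\tau_i)$, which is (\ref{dec_err}).

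The reduction and the hypothesis check are routine. The step I expect to be the real obstacle — or at least the one needing genuine care — is the convergence of $\overline p_i(t)$ itself: Theorem \ref{t2sa04} as stated delivers only ``$\hat p_i-\sigma(\cdot)\to 0$'', not that $\hat p_i$ has a limit, so one must supply the planar no-limit-cycle argument above (equivalently, invoke the monotone/Bendixson structure underlying the proof of Theorem \ref{t2sa04} in \cite{SA04}) before the limit may legitimately be pulled inside $\sigma$. A minor companion point is that $\hat p_i$ may leave $\Delta(2)$, so one should remark that (\ref{ctfp}) and Theorem \ref{t2sa04} are really statements about the ODE on the line $\mathbf{1}^Tx=1$ and that $L^T\hat p_i$ is drawn back into $[-1,1]$ since $L^T\sigma(\cdot)\in(-1,1)$, so no generality is lost.
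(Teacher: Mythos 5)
Your proposal is correct, and its core — the change of variables $\hat{p}_i := D_i^{-1}\overline{p}_i$ turning (\ref{ctde}) into the standard two-action dynamics (\ref{ctfp}) for the modified payoffs $M_1D_2$ and $M_2D_1$, followed by an appeal to Theorem \ref{t2sa04} whose hypothesis $(L^TM_1D_2L)(L^TM_2D_1L)\neq 0$ is exactly the assumed nondegeneracy — is precisely the reduction the paper's framework points to (the paper itself only cites the technical report \cite{NABACC10TR} for the proof, but its treatment of the known-error case via $\overline{p}_i=D_ip_i$ plus Theorem \ref{t2sa04} follows the same pattern). Where you genuinely add something is in the last step: you correctly observe that Theorem \ref{t2sa04} only yields $\hat{p}_i(t)-\beta_i(\overline{p}_{-i}(t))\to 0$, whereas the conclusion (\ref{dec_err}) presupposes that $\lim_{t\to\infty}\overline{p}_i(t)$ exists, and you close this gap with the planar reduction $y_i=L^T\overline{p}_i$, the identity $L^TD_1\beta_1(\overline{p}_2)=(\gamma-\alpha)+(1-\alpha-\gamma)\tanh\bigl(L^TM_1\overline{p}_2/(2\tau_1)\bigr)$, and the Bendixson/Poincar\'e--Bendixson argument (divergence $\equiv -2$) forcing convergence to an equilibrium of (\ref{ctde}), which by construction satisfies $\overline{p}_i^*=D_i\sigma(M_i\overline{p}_{-i}^*/\tau_i)$. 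This buys a self-contained justification of the existence of the limits that the theorem statement (and Theorem \ref{t2sa04} itself) glosses over; note, incidentally, that once you have this planar argument the invocation of Theorem \ref{t2sa04} becomes essentially redundant. Two small points to tighten: Poincar\'e--Bendixson gives convergence to a single equilibrium only after you rule out a continuum of equilibria, which here follows since $y\mapsto g_1(g_2(y))-y$ is real-analytic, nonconstant (it is unbounded below/above), hence has finitely many zeros on the absorbing rectangle; and your remark that $\hat{p}_i$ may leave $\Delta(2)$ but is drawn back to the line segment because $L^T\sigma(\cdot)\in(-1,1)$ is the right way to make the application of Theorem \ref{t2sa04} legitimate off the simplex.
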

\begin{proof}
The proof, some remarks, and a numerical example can be found in \cite{NABACC10TR}.
\end{proof}
\section{Security games with observation errors} \label{sec:obs_errors}

In \cite{NABICC09}, we study the effect of observation errors on convergence to the NE in a $2 \times 2$ FP process. We also prove that if each player has a correct estimate of error probabilities of observations, they can reverse the effect of the channel to obtain the NE of the original static game. In this section, we present a generalized version of these results.
Consider a two-player fictitious play game with imperfect observations where the error channels are given in Equations (\ref{error_channel_mn_1}) and (\ref{error_channel_mn_2}). 
		\begin{eqnarray} \label{error_channel_mn_1}
		C_1 &=& \left(
		\begin{array}{cccc}
			\alpha_{11}	& \alpha_{12} & \ldots & \alpha_{1m}  \\
			\alpha_{21}	& \alpha_{22} & \ldots & \alpha_{2m}  \\	
			\ldots \\
			\alpha_{m1}	& \alpha_{m2} & \ldots & \alpha_{mm}  \\	
		\end{array}
		\right), \\
\label{error_channel_mn_2} 	 
		C_2 &=& \left(
		\begin{array}{cccc}	
			\epsilon_{11}	& \epsilon_{12} & \ldots & \epsilon_{1n}  \\
			\epsilon_{21}	& \epsilon_{22} & \ldots & \epsilon_{2n}  \\	
			\ldots \\
			\epsilon_{n1}	& \epsilon_{n2} & \ldots & \epsilon_{nn}  \\	
				\end{array}
		\right),
		\end{eqnarray}	
		where $\alpha_{ij},\ i,j=1 \ldots m$ is the probability that $P_1$'s action $i$ is erroneously observed as action $j$, $\alpha_{ij} \geq 0$, $\sum_{j=1}^m \alpha_{ij}=1,\ i=1 \ldots m$, and $\epsilon_{ij},\ i,j=1 \ldots n$ is the probability that $P_2$'s action $i$ is erroneously observed as action $j$, $\epsilon_{ij} \geq 0$, $\sum_{j=1}^m \epsilon_{ij}=1,\ i=1 \ldots n$.
Suppose that the players have their estimates of the errror probabilities as follows:
		\begin{eqnarray} \label{error_channel_mn_est_1}
		\overline{C}_1 &=& \left(
		\begin{array}{cccc}
			\overline{\alpha}_{11}	& \overline{\alpha}_{12} & \ldots & \overline{\alpha}_{1m}  \\
			\overline{\alpha}_{21}	& \overline{\alpha}_{22} & \ldots & \overline{\alpha}_{2m}  \\	
			\ldots \\
			\overline{\alpha}_{m1}	& \overline{\alpha}_{m2} & \ldots & \overline{\alpha}_{mm}  \\	
		\end{array}
		\right), \\
		 \label{error_channel_mn_est_2}
		\overline{C}_2 &=& \left(
		\begin{array}{cccc}	
			\overline{\epsilon}_{11}	& \overline{\epsilon}_{12} & \ldots & \overline{\epsilon}_{1n}  \\
			\overline{\epsilon}_{21}	& \overline{\epsilon}_{22} & \ldots & \overline{\epsilon}_{2n}  \\	
			\ldots \\
			\overline{\epsilon}_{n1}	& \overline{\epsilon}_{n2} & \ldots & \overline{\epsilon}_{nn}  \\	
				\end{array}
		\right),
		\end{eqnarray}	
		where $\overline{\alpha}_{ij} \geq 0$, $\sum_{j=1}^m \overline{\alpha}_{ij}=1,\ i=1 \ldots m$, and $\overline{\epsilon}_{ij} \geq 0$, $\sum_{j=1}^m \overline{\epsilon}_{ij}=1,\ i=1 \ldots n$.
\begin{psfrags}
\psfrag{alpha}{$\alpha$}
\psfrag{1-alpha}{$1-\alpha$}
\psfrag{gamma}{$\gamma$}
\psfrag{1-gamma}{$1-\gamma$}
\psfrag{epsilon}{$\epsilon$}
\psfrag{1-epsilon}{$1-\epsilon$}
\psfrag{mu}{$\mu$}
\psfrag{1-mu}{$1-\mu$}
\psfrag{a11}{$a_1^1$ (A)}
\psfrag{a12}{$a_1^2$ (N)}
\psfrag{a21}{$a_2^1$ (D)}
\psfrag{a22}{$a_2^2$ (N)}
\psfrag{Attacker}{$P_1$ (Attacker)}
\psfrag{System}{$P_2$ (Defender)}
\psfrag{Actions}{Actions}
\psfrag{Observations}{Observations}
\begin{figure}[ht]
  \centering
  \includegraphics[width=6.5cm]{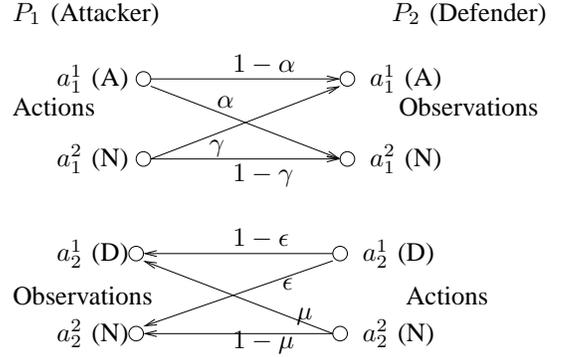}
  \caption{Players observe their opponent's actions through binary channels with error probabilities  $\alpha, \gamma, \epsilon$, and $\mu$.} \label{fig:ascs}
\end{figure}
\end{psfrags}
We first restate Propositions \ref{dec_error_limit} and \ref{dec_error_channel} in the context of repeated games with imperfect observations.
\begin{proposition} \label{obs_error_limit}
Consider the two-player discrete-time fictitious play with imperfect observations where error probabilities are given in Equations (\ref{error_channel_mn_1}) and (\ref{error_channel_mn_2}). Let $\widetilde{\alpha}_{ij}, \ i,j=1 \ldots m$, and $\widetilde{\epsilon}_{ij}, \ i,j=1 \ldots n$, be the empirical error frequencies of observations on $P_1$'s and $P_2$'s actions, respectively. If channel errors are assumed to be independent from stage to stage, it holds that
\begin{eqnarray}
\nonumber \lim_{k \rightarrow \infty} \ a.s. \ \widetilde{\alpha}_{ij}  &=& \alpha_{ij}, \ i,j=1 \ldots m,\\
\lim_{k \rightarrow \infty} \ a.s. \ \widetilde{\epsilon}_{ij} &=& {\epsilon}_{ij}, \ i,j=1 \ldots n.
\end{eqnarray}
where we use \emph{$\lim$ a.s.} to denote \emph{almost sure convergence}.
\end{proposition}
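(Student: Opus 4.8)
The plan is to reduce the statement to a strong law of large numbers applied along the (random) subsequence of stages at which a fixed action occurs. Fix a pair of indices $i,j\in\{1,\ldots,m\}$ pertaining to $P_1$; the argument for $P_2$ is identical with $\epsilon$ in place of $\alpha$ and $n$ in place of $m$. For stage $k$ let $N_i(k)$ be the number of stages $\ell\in\{0,\ldots,k-1\}$ at which $P_1$'s realized action is $a_1^i$, and let $N_{ij}(k)$ be the number of those stages at which this action is in addition observed (by $P_2$) as $a_1^j$. By the definition of the empirical error frequency, $\widetilde{\alpha}_{ij}(k)=N_{ij}(k)/N_i(k)$ whenever $N_i(k)>0$, so the claim amounts to showing that this ratio converges almost surely to $\alpha_{ij}$.

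First I would show that $N_i(k)\to\infty$ almost surely. Since each player runs stochastic FP, at every stage the action is drawn from a soft-max best response, which by (\ref{softmax}) lies in the interior of the simplex; hence, conditioned on the whole past, the probability that $P_1$ plays $a_1^i$ at stage $\ell$ is strictly positive, and in fact bounded below by a positive constant because the arguments of $\sigma(\cdot)$ stay in a compact set (the empirical frequencies do). The conditional (Lévy) extension of the second Borel--Cantelli lemma then gives that $a_1^i$ occurs infinitely often, i.e.\ $N_i(k)\to\infty$ a.s.

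Next I would condition on the entire realized sequence of both players' actions. Under the standing assumption that observation errors are independent from stage to stage, the events ``the $r$th occurrence of $P_1$'s action $a_1^i$ is observed as $a_1^j$'', $r=1,2,\ldots$, are then independent Bernoulli trials with common success probability $\alpha_{ij}$. The ordinary SLLN makes the average of the first $r$ of these indicators converge to $\alpha_{ij}$ a.s.\ as $r\to\infty$; composing with $N_i(k)\to\infty$ (the random-index form of the SLLN) gives $N_{ij}(k)/N_i(k)\to\alpha_{ij}$ a.s.\ on the conditioning event. Since this holds for almost every realization of the action sequence, it holds unconditionally, which is the assertion for $\widetilde{\alpha}_{ij}$; applying the same reasoning to $P_2$'s actions yields the statement for $\widetilde{\epsilon}_{ij}$.

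The main obstacle is purely the bookkeeping around the random sampling times: the stages at which $a_1^i$ occurs form a stopping-time subsequence, so the SLLN cannot be invoked verbatim and one must either use its random-index version or, equivalently, apply the martingale SLLN to $\sum_{\ell<k}\bigl(\mathbf{1}\{\text{obs }j\text{ at }\ell\}-\alpha_{ij}\bigr)\mathbf{1}\{\text{act }i\text{ at }\ell\}$ and divide by $N_i(k)$. Everything else is routine; indeed, as noted in the text, this proposition is simply the restatement in the observation-error setting of Proposition~\ref{dec_error_limit}, whose proof mirrors the $m=n=2$ case treated in \cite{NABICC09}.
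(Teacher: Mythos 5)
The paper itself offers no written proof of this proposition: it is labelled a standard result whose proof is said to mirror the $m=n=2$ argument in \cite{NABICC09}, so your proposal can only be judged on its own merits. Its skeleton is the right one, and the first step is fine: with $\tau_i>0$ the soft-max output is bounded below by a positive constant uniformly over the (compact) set of possible arguments, so by the conditional (L\'evy) Borel--Cantelli lemma every action of $P_1$ occurs infinitely often and $N_i(k)\to\infty$ a.s., which is exactly what makes $\widetilde{\alpha}_{ij}=N_{ij}(k)/N_i(k)$ eventually well defined.

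The flawed step is the one you lean on as the main argument: ``condition on the entire realized sequence of both players' actions'' and then treat the observation errors at the stages where action $i$ occurs as i.i.d.\ Bernoulli$(\alpha_{ij})$. In this closed-loop process that conditioning is not innocuous, because the channel error at stage $\ell$ is fed back into the play: the erroneous observation enters $P_2$'s empirical (or compensated) frequency, hence $P_2$'s subsequent actions, hence (through the other channel) $P_1$'s own later actions. Conditioning on \emph{future} actions therefore changes the conditional law of the error at stage $\ell$, and the ``conditionally independent Bernoulli trials'' claim fails as stated; this is a dependence problem, not merely the random-index bookkeeping you describe. The correct formulation is precisely the one you relegate to a closing remark: given the filtration up to stage $\ell$ together with $P_1$'s action at stage $\ell$, the stage-to-stage independence assumption says the observation is $j$ with probability exactly $\alpha_{ij}$, so $S_k=\sum_{\ell<k}\bigl(\mathbf{1}\{\text{obs }j\text{ at }\ell\}-\alpha_{ij}\bigr)\mathbf{1}\{\text{act }i\text{ at }\ell\}$ is a martingale with bounded increments; the martingale SLLN gives $S_k/k\to 0$ a.s., and since the uniform lower bound on the soft-max probabilities also forces $\liminf_k N_i(k)/k>0$ a.s.\ (same martingale argument applied to $N_i$), dividing by $N_i(k)$ yields $\widetilde{\alpha}_{ij}\to\alpha_{ij}$ a.s. (Equivalently, one can argue that the error sampled at the $r$th occurrence of action $i$ is ``fresh'', i.e.\ independent of the pre-$T_r$ $\sigma$-field, which restores the i.i.d.\ picture without ever conditioning on the future.) Promote that argument from remark to proof and the proposition is established; the same rewrite is needed for your parallel treatment of $\widetilde{\epsilon}_{ij}$ and, implicitly, of Proposition~\ref{dec_error_limit}.
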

\begin{proposition} \label{obs_error_channel}
Consider the two-player discrete-time fictitious play with imperfect observations where error probabilities are given in Equations (\ref{error_channel_mn_1}) and (\ref{error_channel_mn_2}). Let $\overline{q}_{i}$ be the observed frequency and $q_{i}$ be the empirical frequency of player $i$. If channel errors are assumed to be independent from stage to stage, it holds that
\begin{eqnarray} \label{error_freq_prob}
\lim_{k \rightarrow \infty} \ a.s. \ \overline{q}_{i} = C_{i} (\lim_{k \rightarrow \infty} \ a.s. \ q_{i}), \ i=1,2,
\end{eqnarray}
where $C_i$ are the channel error matrices given in Equations (\ref{error_channel_mn_1}) and (\ref{error_channel_mn_2}).
		\end{proposition}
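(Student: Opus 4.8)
The plan is to reduce the assertion to a strong law of large numbers for a bounded martingale-difference sequence, the martingale structure coming from the stage-to-stage independence of the channel errors.

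First I would fix notation: let $a_i(j)$ be the action actually played by $P_i$ at stage $j$ and $\tilde a_i(j)$ the action the opponent observes at stage $j$, so that, applying the recursion (\ref{rec_em_freq}) to both sequences, $q_i(k+1)=\frac{1}{k+1}\sum_{j=0}^{k}v_{a_i(j)}$ and $\overline q_i(k+1)=\frac{1}{k+1}\sum_{j=0}^{k}v_{\tilde a_i(j)}$. Let $\mathcal F_k$ be the $\sigma$-algebra generated by all played and observed actions through stage $k-1$ together with the played actions $a_1(k),a_2(k)$. Since the observation at stage $k$ is produced by passing $a_i(k)$ through the channel and the channel errors are independent across stages, the conditional law of $\tilde a_i(k)$ given $\mathcal F_k$ is the channel distribution determined by $a_i(k)$; in vector form $E[\,v_{\tilde a_i(k)}\mid\mathcal F_k\,]=C_i\,v_{a_i(k)}$, with $C_i$ as in (\ref{error_channel_mn_1})--(\ref{error_channel_mn_2}) and (\ref{error_freq_prob}).

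Next I would introduce the residuals $w_i(j):=v_{\tilde a_i(j)}-C_i v_{a_i(j)}$. By the previous step $E[w_i(j)\mid\mathcal F_j]=0$, and $\|w_i(j)\|$ is bounded by a constant depending only on the dimension, so $\{w_i(j)\}_{j\ge0}$ is a bounded martingale-difference sequence. Averaging the identity $v_{\tilde a_i(j)}=C_i v_{a_i(j)}+w_i(j)$ over $j=0,\dots,k$ gives
\begin{equation*}
\overline q_i(k+1)=C_i\,q_i(k+1)+\frac{1}{k+1}\sum_{j=0}^{k}w_i(j).
\end{equation*}
Since $\sum_{j\ge0}(j+1)^{-2}\,\mathrm{Var}\big((w_i(j))_\ell\big)<\infty$ for every coordinate $\ell$, Kolmogorov's strong law (equivalently, Azuma--Hoeffding together with Borel--Cantelli) gives $\frac{1}{k+1}\sum_{j=0}^{k}w_i(j)\to 0$ almost surely. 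Hence $\overline q_i(k)-C_i q_i(k)\to 0$ a.s.; in particular, whenever $q_i(k)$ converges almost surely the same holds for $\overline q_i(k)$, and $\lim_k\overline q_i(k)=C_i\lim_k q_i(k)$, which is (\ref{error_freq_prob}). (An alternative is to argue coordinatewise through the empirical error frequencies of Proposition \ref{obs_error_limit}, but that needs a joint law of large numbers over the pairs (played action, observed action) and is less direct.)

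The main obstacle is conceptual rather than computational: the channel noise at stage $j$ acts on the \emph{random} action $a_i(j)$ generated by the fictitious-play dynamics, so the observed indicators $v_{\tilde a_i(j)}$ are neither independent nor identically distributed and a naive law of large numbers does not apply. Peeling off the martingale-difference part $w_i(j)$ by conditioning on $\mathcal F_j$ is precisely what repairs this; after that the rest is bookkeeping with (\ref{rec_em_freq}), and the argument is identical to the one used for Proposition \ref{dec_error_channel}, with the decision-error matrices $D_i$ replaced by the observation-error matrices $C_i$.
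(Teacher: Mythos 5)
Your proof is correct, and it takes a different route from the paper, which in fact gives no in-text proof at all: Propositions \ref{dec_error_limit}--\ref{dec_error_channel} (and their observation-error restatements) are presented as ``standard results'' whose proofs are said to be similar to the $m=n=2$ case of the cited ICC 2009 paper, i.e., the route you mention only in passing --- condition on each played action, apply the strong law of large numbers to the i.i.d.\ channel errors along the subsequence of stages where that action is played (this is Proposition \ref{obs_error_limit}), and then reassemble $\overline q_i$ as a mixture of the empirical conditional observation frequencies weighted by $q_i$. Your martingale-difference decomposition $w_i(j)=v_{\tilde a_i(j)}-C_i v_{a_i(j)}$, with the filtration chosen so that $a_i(j)$ is $\mathcal F_j$-measurable and the stage-$j$ channel noise is independent of $\mathcal F_j$, followed by the bounded-increment martingale SLLN (Azuma--Hoeffding plus Borel--Cantelli, or Kronecker), is a cleaner and somewhat stronger argument: it yields the unconditional statement $\overline q_i(k)-C_i q_i(k)\to 0$ a.s., it correctly handles the fact that the played actions are themselves random and history-dependent under fictitious play, and it avoids the bookkeeping the per-action route needs for actions played only finitely often (along which the empirical error frequencies need not converge, although their weight in the mixture vanishes). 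One caveat you have inherited from the paper rather than created: with the general row-stochastic convention of (\ref{error_channel_mn_1})--(\ref{error_channel_mn_2}) ($\alpha_{ij}$ the probability that action $i$ is observed as $j$, rows summing to one), the conditional expectation is $E[v_{\tilde a_i(k)}\mid\mathcal F_k]=C_i^T v_{a_i(k)}$ and the limit relation reads $\overline q_i = C_i^T q_i$; the form $C_i q_i$ used in (\ref{error_freq_prob}) and in your display corresponds to the column-stochastic convention of the $2\times 2$ matrices (\ref{error_channel_22}), so it would be worth stating explicitly which convention you adopt. With that notational point fixed, your argument is complete.
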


If both players have their estimates of the errror probabilities as in Equations (\ref{error_channel_mn_est_1}) and (\ref{error_channel_mn_est_2}), they can play the stochastic FP algorithm given in \ref{sfp_io_algo} with $f_{-i}(\overline{q}_{-i})=(\overline{C}_i)^{-1}\overline{q}_{-i}$ to compensate for observation errors (Using the results in Propositions \ref{obs_error_limit} and \ref{obs_error_channel}). 
Again we can use the same procedure as in Subsection \ref{sub:dec_errors_unaware} to approximate the discrete-time FP with the continuous-time version. 
\begin{eqnarray} \label{}
\nonumber q_1(k+1)  &=&  \frac{k}{k+1} q_1(k) \\
\nonumber && + \ \frac{1}{k+1} \sigma \left( \frac{M_1 (\overline{C}_2)^{-1} C_2  q_{2}(k)}{\tau_1} \right),\\
\nonumber q_2(k+1)  &=&  \frac{k}{k+1} q_2(k) \\
\nonumber &&+ \ \frac{1}{k+1} \sigma \left( \frac{M_2 (\overline{C}_1)^{-1} C_1  q_{1}(k)}{\tau_2} \right).
\end{eqnarray}
The continuous-time approximation is given by:
\begin{eqnarray} 
\nonumber \dot{p}_1(t) & =& \sigma \left( \frac{M_1 (\overline{C}_2)^{-1} C_2 p_{2}(t)}{\tau_1} \right) -  p_1(t),\\
\label{ct_mn_dynamics} \dot{p}_2(t)  &=& \sigma \left( \frac{M_2 (\overline{C}_1)^{-1} C_1 p_{1}(t)}{\tau_2} \right) -  p_2(t).
\end{eqnarray}
It can be seen that a pair of mixed strategies $(q^*_1, q^*_2)$ that satisfies
\begin{eqnarray}
\nonumber  p^*_1(t) & =& \sigma \left( \frac{M_1 (\overline{C}_2)^{-1} C_2 p^*_{2}(t)}{\tau_1} \right),\\
\nonumber p^*_2(t) &=& \sigma \left( \frac{M_2 (\overline{C}_1)^{-1} C_1 p^*_{1}(t)}{\tau_2} \right).
\end{eqnarray}
will be an equilibrium point of the dynamics (\ref{ct_mn_dynamics}). 
For some results on the stability of the equilibrium point in the continuous-time system and the discrete-time system for general values of $m$ and $n$, we refer to \cite{NABACC10TR}. When $m=n=2$, again the point $(p^*_1, p^*_2)$ is globally stable for the continuous-time system under some mild assumptions. We have the following theorem.
\begin{theorem} \label{io_ctfp_inacc}
Consider a two-player two-action fictitious play game with imperfect observations where the error channels are given in Figure \ref{fig:ascs} and Equation (\ref{error_channel_22}). 
		\begin{equation} \label{error_channel_22}
		C_1 = \left(
		\begin{array}{cc}
			1-\alpha	&	\gamma \\
			\alpha &	1-\gamma \\
		\end{array}
		\right), \  \
		C_2 = \left(
		\begin{array}{cc}
			1-\epsilon	&	\mu \\
			\epsilon & 1-\mu	 \\
		\end{array}
		\right)
		\end{equation}	
Suppose that the players have their estimates of the errror probabilities as follows:
 	\begin{equation} \label{error_channel_est}
		\overline{C}_1 = \left(
		\begin{array}{cc}
			1-\overline{\alpha}	&	\overline{\gamma} \\
			\overline{\alpha} &	1-\overline{\gamma} \\
		\end{array}
		\right), \  \
		\overline{C}_2 = \left(
		\begin{array}{cc}
			1- \overline{\epsilon}	&	\overline{\mu} \\
			\overline{\epsilon} & 1-\overline{\mu}	 \\
		\end{array}
		\right)
		\end{equation}	
The players then play the stochastic FP given in \ref{sfp_io_algo}. If $(L^TM_1 (\overline{C}_2)^{-1} C_2 L)(L^T M_2 (\overline{C}_1)^{-1} C_1 L) \neq 0$, the solutions of continuous-time FP with imperfect observations (\ref{ctfp}) will satisfy
\begin{eqnarray} 
\nonumber \lim_{t \rightarrow \infty} p_1(t) = \sigma \left( \frac{M_1 (\overline{C}_2)^{-1} C_2 \lim_{t \rightarrow \infty} p_{2}(t)}{\tau_1} \right) , \\ 
\label{error_channel_est_theorem} \lim_{t \rightarrow \infty} p_2(t) = \sigma \left( \frac{M_2 (\overline{C}_1)^{-1} C_1 \lim_{t \rightarrow \infty} p_{1}(t)}{\tau_2} \right).
\end{eqnarray}
\end{theorem}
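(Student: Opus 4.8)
The plan is to reduce the statement to Theorem~\ref{t2sa04} by folding the true channels $C_i$ and the estimated channels $\overline{C}_i$ into a pair of \emph{effective payoff matrices}. Set
\[
\tilde{M}_1 := M_1\,(\overline{C}_2)^{-1} C_2, \qquad \tilde{M}_2 := M_2\,(\overline{C}_1)^{-1} C_1 ,
\]
where invertibility of $\overline{C}_1,\overline{C}_2$ is understood (it is implicit in the statement, and for the $2\times2$ channels amounts to $\overline{\epsilon}+\overline{\mu}\neq 1$ and $\overline{\alpha}+\overline{\gamma}\neq 1$). By the definition~(\ref{best_response}) of the soft-max best response, the map $p_2 \mapsto \sigma\!\left(M_1 (\overline{C}_2)^{-1} C_2\, p_2/\tau_1\right)$ is precisely the best response $\beta_1$ of the two-player two-action game with payoff matrix $\tilde{M}_1$, and likewise $p_1 \mapsto \sigma\!\left(M_2 (\overline{C}_1)^{-1} C_1\, p_1/\tau_2\right)$ is $\beta_2$ for payoff matrix $\tilde{M}_2$ (the argument of $\sigma$ is a well-defined element of $\Re^2$ whether or not it lies in the simplex, so the vector field is smooth on all of $\Delta(2)\times\Delta(2)$). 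Hence the compensated continuous-time dynamics~(\ref{ct_mn_dynamics}) are, verbatim, the continuous-time fictitious play~(\ref{ctfp}) for the game $(\tilde{M}_1,\tilde{M}_2)$.

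First I would verify that the hypothesis of Theorem~\ref{t2sa04} holds: its nondegeneracy condition $(L^T \tilde{M}_1 L)(L^T \tilde{M}_2 L)\neq 0$ is, by construction of $\tilde{M}_i$, identical to the standing assumption $(L^T M_1 (\overline{C}_2)^{-1} C_2 L)(L^T M_2 (\overline{C}_1)^{-1} C_1 L)\neq 0$. Applying Theorem~\ref{t2sa04} to $(\tilde{M}_1,\tilde{M}_2)$ then yields
\[
\lim_{t\to\infty}\Bigl(p_1(t)-\sigma\bigl(\tfrac{M_1 (\overline{C}_2)^{-1} C_2\, p_2(t)}{\tau_1}\bigr)\Bigr)=0, \qquad \lim_{t\to\infty}\Bigl(p_2(t)-\sigma\bigl(\tfrac{M_2 (\overline{C}_1)^{-1} C_1\, p_1(t)}{\tau_2}\bigr)\Bigr)=0 .
\]

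The remaining task --- upgrading these two ``difference vanishes'' statements to genuine convergence of $p_1(t)$ and $p_2(t)$, so that one may pass to the limit inside $\sigma$ and obtain~(\ref{error_channel_est_theorem}) --- is the step I expect to be the main obstacle. Since $m=n=2$, the trajectory stays in the compact set $\Delta(2)\times\Delta(2)$, which I would coordinatize by the two scalars $L^T p_1, L^T p_2 \in [-1,1]$; in these coordinates~(\ref{ct_mn_dynamics}) is a planar, real-analytic system, so its $\omega$-limit set $\Omega$ is nonempty, compact, and connected. Combining the displayed limits with the continuity of $\beta_1,\beta_2$ shows that every point of $\Omega$ satisfies $p_i=\beta_i(p_{-i})$, hence is an equilibrium of~(\ref{ct_mn_dynamics}); writing the equilibrium relation in the scalar coordinate $u=L^T p_1$ gives a fixed-point equation $u=F(u)$ with $F$ real-analytic and nonconstant on $[-1,1]$ --- nonconstancy being exactly what $(L^T\tilde{M}_1 L)(L^T\tilde{M}_2 L)\neq 0$ guarantees --- so the equilibrium set is finite. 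A connected subset of a finite set is a single point, so $\Omega$ is a singleton and $p_1(t),p_2(t)$ converge; passing to the limit in the displayed relations and using continuity of $\sigma$ then delivers~(\ref{error_channel_est_theorem}). (Alternatively, this last paragraph can be bypassed by invoking the trajectory convergence already contained in the proof of Theorem~\ref{t2sa04}; see \cite{SA04,NABICC09}.)
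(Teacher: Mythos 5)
Your reduction is correct, and it is almost certainly the intended one: the paper itself gives no in-text proof of Theorem~\ref{io_ctfp_inacc} (it defers to the technical report \cite{NABACC10TR}), but the way its hypothesis is phrased --- $(L^TM_1 (\overline{C}_2)^{-1} C_2 L)(L^T M_2 (\overline{C}_1)^{-1} C_1 L) \neq 0$, i.e.\ exactly the condition of Theorem~\ref{t2sa04} for the effective matrices $\tilde{M}_1=M_1(\overline{C}_2)^{-1}C_2$, $\tilde{M}_2=M_2(\overline{C}_1)^{-1}C_1$ --- signals precisely your observation that the compensated dynamics (\ref{ct_mn_dynamics}) are verbatim the continuous-time FP (\ref{ctfp}) for the game $(\tilde{M}_1,\tilde{M}_2)$, so Theorem~\ref{t2sa04} applies. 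Your extra step is also genuinely needed and is handled correctly in outline: Theorem~\ref{t2sa04} only yields $p_i(t)-\beta_i(p_{-i}(t))\to 0$, whereas (\ref{error_channel_est_theorem}) presupposes that $\lim_{t\to\infty}p_i(t)$ exists, and your $\omega$-limit-set argument (compactness and connectedness of $\Omega$, every $\omega$-limit point is a best-response fixed point, finitely many such points, hence a singleton) is the standard way to close that gap. One small patch: finiteness of the equilibria does not follow from $F$ being real-analytic and \emph{nonconstant} alone, since the identity map is nonconstant yet has a continuum of fixed points; you need $F(u)-u\not\equiv 0$, which holds here because $F$, being a composition of two strictly monotone logistic maps, sends the closed coordinate interval into a compact subset of its interior (equivalently, $\beta_i$ takes values in the interior of the simplex), so $F$ cannot be the identity and the analytic function $F(u)-u$ has finitely many zeros. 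With that one-line fix, and granting the (implicit) invertibility of $\overline{C}_1,\overline{C}_2$ that the statement already assumes in writing $(\overline{C}_i)^{-1}$, your proof is complete and self-contained, arguably more explicit on the limit-existence point than the citation-only treatment in the paper.
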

where $\sigma(.)$ is the soft-max function defined in (\ref{softmax}).
\begin{proof}
The proof, some remarks, and a numerical example can be found in \cite{NABACC10TR}.
\end{proof}
\section{Conclusion}

In this paper, we have introduced and discussed some repeated security game models that take into account players' decision errors and observation errors. Each player does not have access to her opponent's payoff matrix and thus has to learn this through the fictitious play process. However, in a practical setting, each player is expected to make random decision errors from time to time and also has to respond to imperfectly observed actions of the other player. We have studied the convergence property of such games and, if the FP process does converge, quantified the new equilibrium. Such analyses will help provide guidelines for players to maximize their gain or minimize their loss in a nonideal environment. 

We normally start from the mean dynamics of the discrete-time version of a game, proceed to continuous-time approximation and then analyze convergence of this continuous-time version. Although the convergence of the continuous-time fictitious play does not guarantee the almost sure convergence of the discrete-time counterpart, it does provide the necessary limiting results for the discrete-time version. 
\section{ACKNOWLEDGMENTS}

We would like to thank Deutsche Telekom Laboratories and the Boeing Company for their support. We are also grateful to the anonymous reviewers for their valuable comments.


\end{document}